\newtheorem*{theorem*}{Theorem}
\newtheorem*{lemma*}{Lemma}
\def\be{\begin{equation}}
\def\ee{\end{equation}}
\def\bea{\begin{eqnarray}}
\def\eea{\end{eqnarray}}
\def\la{\langle}
\def\ra{\rangle}
\begin{document}
\title{Thermal stability of two-dimensional Topological Color Code}

\author{Razieh Mohseninia}
\email{mohseninia@physics.sharif.ir}
\affiliation{Department of Physics, Sharif University of Technology, P.O. Box 11155-9161, Tehran, Iran.}

\begin{abstract}
Thermal stability of the Topological Color Code in presence of a thermal bath is studied. We study the Lindblad evolution of the observables in the weak coupling limit of the Born-Markov approximation. The  auto-correlation functions of the observables are used as a figure of merit for the thermal stability. We show that all of the observables auto-correlation functions decay exponentially in time. By finding a lower bound of the decay rate, which is a constant independent of the system size, we show that the Topological Color Code is unstable against thermal fluctuations from the bath at finite temperature, even though it is stable at T=0 against local quantum perturbations.
\end{abstract}
\pacs{03.67.-a, 03.65.Vf, 05.30.Pr, 03.65.Yz, 03.67.Pp, 02.50.Ga.}
\maketitle

\section{Introduction}\label{intro}
The fragility of the qubits in presence of decoherence and external noise is the biggest obstacle in realizing a scalable quantum computer. To overcome such problems, quantum error correcting codes have been invented \cite{error1,error2,error3,error4,error5,error6,error7,error8,error9}. The main idea of the error correcting codes is to encode  information in a many particle system, this many particle system plays the role of a stable  logical qubit. However, error correcting models are themselves cause of errors and the error threshold below which one can perform fault tolerant quantum computation is very low \cite{preskillthreshold, barbara terhal}.

Topological quantum codes have emerged as the most promising candidates to achieve fault tolerant quantum computation. In these models information is stored in global properties of the model, say topologically degenerate ground states of the system. In particular, this kind of coding is shown to be robust against local perturbations from the environment provided they are local in space and time and they occur at T=0 temperature. Examples of good topological codes are Kitaev Code \cite{kitaev} and Topological Color Code (TCC) \cite{color code1}.  A very important figure of merit to assess the goodness of a topological code is the error threshold. When only qubit errors occur, the error threshold for the TCC turns out to match the one by the Kitaev model, namely, 11$\%$ \cite{threshold1,threshold2}. However, for more realistic situations when the measurement process is  also prone to errors, the TCC threshold is 4.5 $\%$ \cite{threshcolor1,threshcolor2}, even better than 
the one for the Kitaev Code which is 2.9 $\%$ \cite{threshkitaev}.

Topological Color Codes have shown very versatile properties for doing fault-tolerant quantum computation. In 2D, a TCC can implement the Clifford group of gates in a transversal way \cite{color code1}; this implementation of Clifford group with TCC makes quantum teleportation, distillation of entanglement and dense coding possible in a fully topological manner. Moreover, three dimensional extensions of TCC can also achieve universal quantum computation \cite{color code2}. The first realization of this model has been done in \citep{science}.

An open problem is to find topological codes resilient to thermal fluctuations from the environment. A first indication that the behavior of topological codes may be different at non-zero temperature was advanced in \cite{threshold1, threshold2}, and then it was confirmed by a rigorous proof in \cite{Horo,Horodecki2} within the setting of the dynamics of quantum open systems governed by Lindblad dynamics. It has been shown in \cite{Horo,Horodecki2} that the Kitaev model in two spacial dimensions is not a stable memory in presence of a thermal bath. Interestingly enough, it is possible to stabilize topological codes under thermal noise provided the lattice system can be defined in higher spatial dimensions. Namely, a thermally robust topological quantum memory in D=4 spatial dimensions can be constructed with the Kitaev code \cite{Horodecki4}, and a fully fledged universal quantum computer robust to thermal noise can be constructed in D=6 dimensions with Topological Color Codes \cite{selfcolor}.

In this paper we address the problem of thermal stability of TCC in a two-dimensional lattice, based on a mathematically rigorous analysis of the thermal effects on the model. According to the paper \cite{kubica}, the Color Code on a two dimensional hexagonal lattice can be mapped to two de-coupled Kitaev models on two dimensional triangular lattices by local unitary actions. In the original lattice on which the Color Code is defined (Hexagonal lattice) qubits lie on vertices of the lattice, however on the mapped model the qubits lie on the edges of the triangular lattices. Although the Color Code is mapped to two de-coupled Toric codes, an error applied from the bath on a single qubit of the Color Code, corresponds in the mapped model, to errors causing excitations in the two de-coupled Toric codes, i.e. the two disjoined lattices. This means that the processes of creation of excitations in these two disjoined lattices are not independent from each other, thus, due to the coupling to the bath, these two disjoined Toric codes can be correlated. This possibility was not taken into account in the previous works on the stability of Kitaev model, and therefore, by knowing the thermal stability properties of one Toric code one can not gain any information about the thermal stability of the Color Code, and this problem is not trivial. 

The method that we use is similar to the one that is used in \citep{Horodecki2}. To this end, we study  the dynamics of the TCC, weakly interacting with a heat bath in the Born-Markov approximation. The evolution of the observables governed by Lindblad dynamics and their auto-correlation function in time is studied as a tool for proving the instability of this model. We show that all of the observables auto-correlation functions decay exponentially with a constant decay rate which means that the model in unstable against thermal noise, although it is stable against local quantum perturbation at zero temperature.

The rest of the paper is organized as follows: in Sec. \ref{color code} we review the main features of the TCC. In Sec. \ref{Markov} we provide some basic results of the Markovian approximation in the weak coupling limit. Sec. \ref{s&ins} deals with reviewing the stability and instability conditions of the topological memories. In Sec. \ref{instability} these conditions are checked for the case of TCC and its instability is proved rigorously. Finally Sec. \ref{conclusion} is devoted to conclusive remarks. In appendix A, we prove the negativity of Lindblad super-operator.

\section{Topological Color Code}\label{color code}
Topological Color Code is a class of topological codes that can be defined on any three colorable lattice, where by colorable we mean colorable by face or equivalently by edge \cite{color code1}. In the present work, we consider a two-dimensional hexagonal lattice, 2-colex \cite{color code3}, on which the periodic boundary conditions are imposed on both sides. This lattice consists of $N$ plaquettes, $2N$ vertices and $3N$ edges. A three colorable lattice is a lattice on which one can color its plaquettes with three different colors (Red, Green, Blue\cite{foot1}) in a way that the plaquettes with the same color do not share any links. Each link connects two plaquettes with the same color and, therefore, one can ascribe every link with this special color.

The qubits, in this model, live on the vertices. The Hamiltonian of the model consists of two kinds of plaquette operators, $B_{p}^x$ and $B_{p}^z$, which are defined as follows:
\be
B_{p}^x=\prod_{i \in p} \sigma_{x,i}, \qquad \qquad \qquad B_{p}^z=\prod_{i \in p} \sigma_{z,i},
\ee
where $\sigma_{x}$ and $\sigma_z$ are ordinary Pauli operators and $p$ denotes a plaquette. Note that $B_p^x$ and $B_p^z$ can be defined for all the plaquettes and thus we have a total of $N$ distinct $B_p^x$ operators and $N$ distinct $B_p^z$ operators. The Hamiltonian is given by:
\be
H=-J\sum_pB_{p}^x-J\sum_pB_{p}^z,
\ee
where, the summation is done over all of the plaquettes. All of the operators in the Hamiltonian commute with each other, since they either share two qubits or none. Thus, the Hamiltonian is exactly solvable. The plaquette operators also square to identity and therefore, have $\pm1$ eigenvalues.

One should note that there are $2N$ qubits and $2N$ stabilizers \cite{foot2} (for further details about stabilizer quantum codes please see \cite{stab1} and \cite{stab2}) in the Hamiltonian. Nevertheless, all of these stabilizers are not independent, because of these constrains on the torus:
\be \label{cons}
\prod_{p \in R} B_{p}^\sigma =\prod_{p \in B} B_{p}^\sigma =\prod_{p \in G} B_{p}^\sigma, \qquad \qquad \sigma=x,z.
\ee

\begin{figure}[t]
\begin{center}
\includegraphics[scale=0.40]{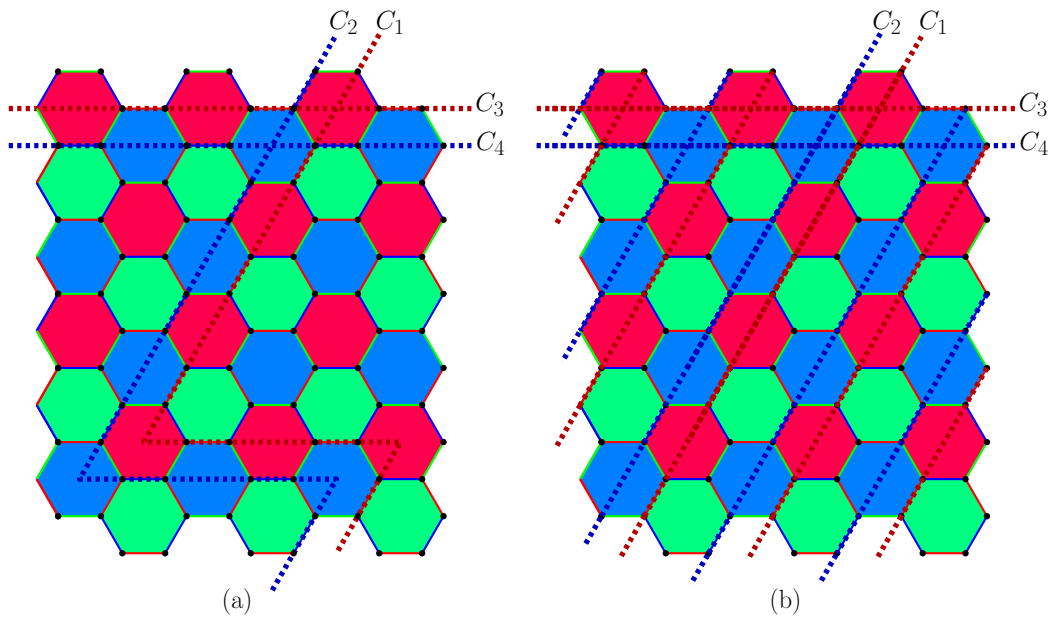}
\caption{(Color Online): TCC on a hexagonal lattice. The logical operators are defined on non-trivial loops shown with $C_i$'s. (a) The non-trivial loops $C_1$ and $C_2$ turn around the torus once. (b) The non-trivial loops $C_1$ and $C_2$ turn around the torus three times and they pass all of the plaquettes having the same color.}\label{lattice}
\end{center}
\end{figure}

The number of constrains for each type of plaquette operators ($\lbrace B_p^x \rbrace$ and $\lbrace B_p^z \rbrace$) is 2, therefore, there are $2N-4$ independent stabilizers in the Hamiltonian and the Hamiltonian has $\frac{2^{2N}}{2^{2N-4}}=16$ degenerate ground states. The ground sub-space of the Hamiltonian is the sub-space spanned by the states which are stabilized by all of the plaquette operators simultaneously ($B_{p}^x |gs\ra=B_{p}^z|gs\ra=|gs\ra$) and one of these states can be represented as:
\be
|gs\ra=\prod_p \left(1+B_{p}^x \right)|0\ra^{\otimes N},
\ee
up to a normalization factor. To construct the other $15$ ground states one needs to define the following logical operators:
\be \label{logicZ}
{Z}_1=\prod_{i \in C_1} \sigma_{z,i}, \quad  {Z}_2=\prod_{i \in C_2} \sigma_{z,i}, \quad {Z}_3=\prod_{i \in C_3} \sigma_{z,i},  \quad {Z}_4=\prod_{i \in C_4} \sigma_{z,i},
\ee
\noindent
\be \label{logicX}
{X}_1=\prod_{i \in C_4} \sigma_{x,i},  \quad {X}_2=\prod_{i \in C_3} \sigma_{x,i},  \quad {X}_3=\prod_{i \in C_2} \sigma_{x,i}, \quad {X}_4=\prod_{i \in C_1} \sigma_{x,i},
\ee 
where $C_1$, $C_2$, $C_3$ and $C_4$ are four non-trivial loops in the torus in the sense that they can not be written as a tensor product of some plaquette operators (figure \ref{lattice}). One should note that there are only two non-trivial loops for each non-trivial homology cycles in a torus, the Blue loop and the Red loop; the third non-trivial loop (Green) can be written as a tensor product of the red and the blue ones, i.e. 
\be \label{equivalence}
C_r C_b C_g \sim 1, \qquad \qquad \qquad  C_r C_b \sim C_g,
\ee
up to some plaquette operators. Using these logical operators, all of the $16$ ground states can be represented as follows:
\be
|i_1,i_2,i_3,i_4\ra = {X}_{1}^{i_1} {X}_{2}^{i_2} {X}_{3}^{i_3} {X}_{4}^{i_4} |gs\ra, \qquad \qquad i_n=0,1, \qquad n=1,2,3,4.
\ee

The non-trivial loops can be represented in two ways as shown in figure \ref{lattice}. In figure \ref{lattice}a they turn around the torus once, while in figure \ref{lattice}b they turn around three times (which is a function of the system size). One should note that the two types of representing the loops are equivalent in the sense that one can deform the two representations into each other by using a set of appropriate plaquette operators. The second representation (figure \ref{lattice}b) will be used in section \ref{instability}.

In a realization of Topological Color Code, information can be stored in the topologically degenerate ground states of the system. One can use the ground states to encode $4$ logical qubits. Due to its topological order, the model is robust against local perturbations and the only perturbations that may cause logical error are those with a length equal to the system size. Moreover the Clifford group's generators can be implemented by this model, which is sufficient for doing quantum distillation of entanglement without any need to address single qubits and to braid the quasi-particles \cite{color code1,color code2}.

\section{Markovian approximation in the weak coupling limit}\label{Markov}
Consider a quantum system which is not closed  and is coupled to a thermal bath at temperature $T$. One can attribute the following total Hamiltonian to the system and the bath, which form a closed system together:
\be
H=H^{\text{sys}}+H^{\text{bath}}+H^{\text{int}}, \qquad \qquad H^{\text{int}}= \sum_\alpha S_\alpha \otimes f_\alpha,
\ee
where, $H^{\text{sys}}$ is the topologically ordered Hamiltonian of the system whose stability is being studied and $H^{\text{bath}}$ is the Hamiltonian of the bath that we do not have any knowledge of it and $H^{\text{int}}$ is the system-bath interaction Hamiltonian. $S_\alpha$'s are operators acting on the system and $f_\alpha$'s are operators acting on the bath, and without loss of generality we can assume that they are Hermitian \cite{angel rivas}.\\

In the weak coupling limit of the interaction Hamiltonian, an operator in the Heisenberg picture, evolves as follows \cite{angel rivas,week1,week2,week3}:
\be
\frac{dX}{dt}= \mathcal{G} (X):= i[H^{\text{sys}},X]+ \mathcal{L}(X).
\ee
where, $i$ is the complex imaginary unit and $\mathcal{G}$ is the generator of the evolution, which consists of two parts. The first part is the normal generator of the evolution of closed quantum systems and the second one is the Lindblad generator or the dissipative part of the evolution, due to existence of the bath. The latter can be given as \cite{angel rivas,week1,week2,week3}:
\bea \label{lindblad}
\mathcal{L}(X)&=&  \sum_\alpha \sum_{\omega \geq 0} \mathcal{L}_{\alpha,\omega} (X)\\ \nonumber
&=& \frac{1}{2} \sum_\alpha \sum_{\omega \geq 0} h_\alpha(\omega) \Bigg( S^\dagger_\alpha(\omega) [X, S_\alpha(\omega)]+[S^\dagger_\alpha(\omega) ,X] S_\alpha(\omega) \\ \nonumber
&+& e^{- \beta \omega} S_\alpha(\omega) [X, S^\dagger_\alpha(\omega)] +e^{- \beta \omega} [S_\alpha(\omega) ,X] S^\dagger_\alpha(\omega)\Bigg),  \nonumber
\eea
where $\beta$ is the inverse of the temperature of the system and the factors $h_\alpha(\omega)$'s are the Fourier transforms of the auto-correlation functions of $f_\alpha$'s and we have used the relations $h_\alpha(-\omega)=e^{- \beta \omega} h_\alpha(\omega)$. In addition $S_\alpha(\omega)$ is the Fourier transform of  $S_\alpha$:
\be
S_\alpha (\omega) =\sum_{\epsilon-\epsilon^\prime=\omega} \Pi_{\epsilon^\prime}  S_\alpha  \Pi_{\epsilon},
\ee
where $\Pi_{\epsilon}$  is the projector onto the sub-space with energy $\epsilon$ and $\omega$'s are the Bohr frequencies of the system Hamiltonian. One can further check that $S_\alpha(-\omega)=S^\dagger_\alpha(\omega)$ and $\sum_\omega S_\alpha(\omega)=S_\alpha$.

\subsubsection*{\textbf{Properties of the Lindblad super-operator}}
In this section we briefly review some of the essential features of  the Lindblad super-operator, needed for our study:
\begin{itemize}
\item  \textbf{Self-adjointness of $\mathcal{L}$ }: If we define the Liouville scalar product as follows:
\be
\la X,Y \ra_\beta:= \text{tr} (\rho_\beta X^\dagger Y),
\ee
the Lindblad super-operator is self-adjoint with respect to it, i.e:
\be
\la X,\mathcal{L}(Y) \ra_\beta=\la \mathcal{L}(X),Y \ra_\beta.
\ee
From here on, in the rest of this paper by scalar product we mean the Liouville scalar product and we withdraw writing $\beta$ symbol.
\item  \textbf{Positivity of $- \mathcal{L}$ }: The Lindblad super-operator is negative which means that:
\be
-\la X, \mathcal{L}(X)\ra \geq 0, \qquad  \forall  X.
\ee
The negativity of $\mathcal{L}$ is proved in the Appendix.
\item \textbf{Gap of $\mathcal{-L}$}: Because of the positivity of $-\mathcal{L}$, its smallest eigenvalue different from $0$ is defined as its gap:
\be
\mathrm{Gap} (-\mathcal{L}) := \min_X {\bigg(- \la X, \mathcal{L}(X)\ra:  \forall X \neq I \bigg)}.
\ee
where, $I$ is the identity operator.
\end{itemize}

\section{Stability and instability conditions for topological memories}\label{s&ins}
\begin{itemize}
\item \textbf{Stability}: To prove the stability of a memory at finite temperature and its capability for coding the information, one should find an observable as the logical operator for the logical qubit such that by increasing the system size the auto-correlation function of the observable does not decrease in time. More rigorously one should find an observable $X$ and a decay rate $\epsilon$, such that:
\be \label{sta}
\la X,X(t) \ra \geq e^{- \epsilon t} \la X,X \ra,
\ee
where $\epsilon$ is the decay rate of the auto-correlation function of the observable $X$. In case of a stable memory the decay rate should decrease exponentially with system size ($\epsilon=e^{-N a}$), so that by increasing the system size the decay rate goes to zero. This means that the autocorrelation function of the an observable, in the limit of large system size, will not decrease in time and the memory will be stable and self-correcting \cite{Horodecki4}. By substituting  $X(t)= e^{t \mathcal{L} }X$ into equation \ref{sta}, the condition for the stability recasts into the following:
\be
- \la X, \mathcal{L}(X)\ra \leq \epsilon.
\ee
where $\epsilon$ decays exponentially with the system size.

\item  \textbf{Instability}: To prove the instability of a memory one should prove that the auto-correlation function of all of the observables with time, decreases faster than an exponential function, which means that for any observable we have:
\be
\la X,X(t) \ra \leq e^{- \epsilon t}  \la X,X \ra.
\ee
By substituting $X(t)= e^{t \mathcal{L} }X$ into the above equation the instability condition of a memory recasts into the following:
\be
- \la X, \mathcal{L}(X)\ra \geq \epsilon,
\ee
which means that for proving the instability of a memory one should estimate a lower bound of $- \la X, \mathcal{L}(X)\ra$ by minimizing it over all of the observables. If this quantity is a constant independent of the system size or is a variable of the system size that does not decrease with the size of the system, the memory is unstable. Since in the finite time the auto-correlation goes to zero and the encoded information lost (for more details please see \cite{Horodecki2,analysis}).\\

Therefore, proving the stability of a memory is nothing but obtaining the gap of $-\mathcal{L}$, which is denoted by $G$:
\be
\mathrm{Gap} (-\mathcal{L}) := G=\min_X  \bigg(- \la X, \mathcal{L}(X)\ra : \forall X \neq I \bigg).
\ee
Applying Eq. (\ref{positive}) (see Appendix A for further details), one obtains:
\be
G \geq  \frac{1}{2}\min_{\alpha, \omega}(h_\alpha(\omega))\min_X{\bigg( \sum_{\alpha,\omega} \la [S_\alpha(\omega),X],[S_\alpha(\omega),X]\ra\bigg)}.
\ee
Moreover, by using equation 28 of reference \cite{Horodecki4} which indicates that:
\be
\sum_\omega \la [S_\alpha(\omega),X],[S_\alpha(\omega),X]\ra = \sum_{\omega, \omega^\prime} \la [S_\alpha(\omega),X],[S_\alpha(\omega^\prime),X]\ra
\ee

and the relation $\sum_\omega S_\alpha (\omega)=S_\alpha$, we obtain:
 \be
G \geq \frac{1}{2}\min_{\alpha, \omega}(h_\alpha(\omega)) \min_X{ \bigg(\sum_{\alpha} \la [S_\alpha,X],[S_\alpha,X]\ra \bigg)},
\ee
which is easier to estimate. One should note that $h_\alpha(\omega)$ is the Fourier transform of the auto-correlation function of $f_\alpha$  and it can be supposed that $h_\alpha(\omega)$ does not depend on $\alpha$'s, which means that the action of the bath and the strength of interaction Hamiltonian is uniform in the whole system. Thus, the minimum of $h_\alpha(\omega)$ is equal to $e^{- \beta \Delta} h(\Delta)$, where $\Delta$ is the gap of the Hamiltonian and for TCC, $\Delta$ is equal to $6J$. Therefore, the lower bound of the gap recasts into the following:
\be \label{lower bound}
G \geq \frac{1}{2} e^{- \beta \Delta} h(\Delta) \min_X{ \bigg(\sum_{\alpha} \la [S_\alpha,X],[S_\alpha,X]\ra \bigg)}.
\ee
\end{itemize}

\section{Thermal instability of the Topological Color Code} \label{instability}
Consider a realization of the Topological Color Code, which is coupled to a thermal bath at temperature $T$. Due to this coupling, errors can be applied from the bath on the system. The errors usually do not commute with the Hamiltonian and the system will not remain in the ground states anymore. If the system is able to correct itself, i.e. it can remove errors to stay in the ground sub-space, it can be considered as a stable topological memory. \\

In the present work, thermal stability of TCC at finite temperature is studied. We assume that the interaction Hamiltonian between the system and the bath is of the following form:
\be \label{interaction}
H^{\text{int}}=\sum_i \left( \sigma_{x,i} \otimes f_{x,i} + \sigma_{z,i} \otimes f_{z,i}\right),
\ee
where $\sigma_{x}$ and $\sigma_{z}$ are applied from the bath on each qubit. To understand the effect of this Hamiltonian on the system, consider the $i^{th}$ qubit, for example; $\sigma_{x,i}$ ($\sigma_{z,i}$) anti-commutes with the three $z$ type ($x$ type) plaquette operators that have this qubit in common. Thus, if $\sigma_{x,i}$ acts from the bath on the ground state, because of this anti-commutation, the eigenvalues of the three paluettes operators become $-1$ and the system leaves the ground sub-space and consequently the code space. 

We can assume that in a plaquette with $-1$ eigenvalue, an excitation (quasi-particle) has been created. In TCC the excitations can move freely and cause logical errors. Therefore, it seems that the model is not self correcting. By having this intuition we present a rigorous proof of the instability of this model, i.e we shall estimate a lower bound for the gap of the Lindblad super-operator corresponding to the model as it is discussed in the previous section.

\subsection{Excitations}\label{Exc}
If the operators that are applied from the bath on the system do not commute with the Hamiltonian, they create excitations. In this section we introduce the generators for having all possible excitations in TCC. In this model the excitations appear in many different ways and not necessarily in pairs; however, all of them can be generated by the use of two kinds of generators:
\begin{itemize}
\item \textbf{Open strings}: Corresponding to each color, there is a global open string as shown in figure \ref{excitations}b. The red global open string, for example, is obtained by inserting a site at the center of every red plaquette and then connecting the sites through the red links which have the same orientation \cite{foot3} (exactly as the $C_r$ string shown in figure \ref{excitations}b). Note that each link of the global strings corresponds to two nearest neighbor qubits in the original lattice.
Let us call the connected subsets of the global open strings as open strings.

 As an example, consider an open string with only one link. By acting with $\sigma_x  \otimes\sigma_x$ or $\sigma_z  \otimes \sigma_z$ on the two qubits that lie on the only link of the string, the two plaquettes that are on the two ends of the string, will be excited (figure \ref{excitations}a). By increasing the length of this string one can move the excitations to any other two red plaquettes.

In this case, excitations appear in pairs. In a lattice with $N$ plaquettes and $2N$ qubits, to generate all of the open string operators, $2N-6$ qubits are needed; these qubits are located on $C_r, C_b$ and $C_g$ global strings.

\begin{figure}[t]
\begin{center}
\includegraphics[scale=0.35]{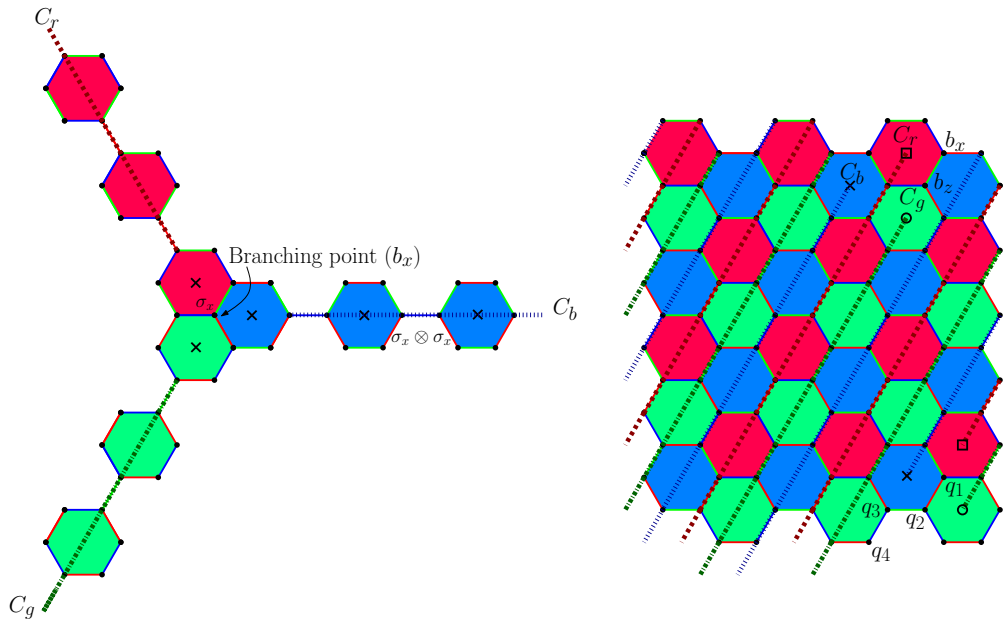}
\caption{(Color Online): The open strings $C_r, C_b, C_g$ and branching points $b_x, b_z$ for creating all kinds of excitations in TCC.}\label{excitations}
\end{center}
\end{figure}

\item \textbf{Branching points}: In TCC it is possible to have three excitations in three plaquettes with different colors. It is impossible to create such excitations by using the open strings; however by using a single-qubit operator along with the open string operators, one can have all kinds of excitations. More rigorously, by acting with $\sigma_x$ on one qubit, which we call branching point, excitations are created in the three plaquettes that have this qubit in common \cite{foot4}. By using open strings one can move the excitations from these three plaquettes to other plaquettes with the same color.

One should note that because of the relation \ref{cons}, arbitrary number of excitations for different colors are not allowed and there are certain constraints on the number of excitations of different colors. For example, a single excitation with red color is not allowed. 
\end{itemize}
Therefore, all kinds of excitations can be generated by applying $\sigma_{x}\otimes \sigma_{x}$ or $\sigma_{z}\otimes \sigma_z$ on the qubits that belong to the open strings and also $\sigma_x$ and $\sigma_z$ on the branching points $b_x$ and $b_z$, respectively.

In a hexagonal lattice with $2N$ qubits, to generate all kinds of excitations one needs $2N-6+2=2N-4$ qubits. Thus, four qubits are left ($q_1, q_2, q_3$ and $q_4$ shown in figure \ref{excitations}b). This is consistent with having 16 degenerate ground states and 4 logical qubits.

\subsection{Observables} \label{Obs}
All of the observables corresponding to a 2-dimensional Hilbert Space, can be generated by $\sigma_x$ and $\sigma_z$. Therefore, the algebra of the observables for a system consisting of $N$ qubits may be written as:
\be
\mathcal{O}=\mathcal{Q}_1 \otimes \mathcal{Q}_2 \otimes \dots \otimes \mathcal{Q}_N,
\ee
where $\mathcal{Q}_i$ is the algebra of the observables of the $i^{th}$ qubit which is generated by $\sigma_{x,i}$ and $\sigma_{z,i}$. However, one can construct all of the observables in another way by the use of the operators present in the Hamiltonian, the logical operators and  the generators needed to create all kinds of excitations. The latter depends on the form of the Hamiltonian. For the case of TCC, the generators of the algebra of the observable are of the following three types:
\begin{itemize}
\item The $x$ and $z$ type plaquette operators in the Hamiltonian, $B_p^x$, $B_p^z$ operators.
\item The logical operators defined in equations \ref{logicZ} and \ref{logicX} \cite{foot5}.
\item The generators of the excitations:
\begin{itemize}
\item $\sigma_{x} \otimes \sigma_x$  or $\sigma_{z} \otimes \sigma_z$ acting on two nearest neighbor qubits that lie on $C_r$, $C_b$ and $C_g$ strings (figure \ref{excitations}).
\item $\sigma_x$ acting on $b_x$ and $\sigma_z$ acting on $b_z$.
\end{itemize}
\end{itemize}

\subsection{The Gap of generator of the Topological Color Code}
In this section, to prove the thermal instability of the TCC, we obtain a lower bound for the gap of generator of the evolution, due to the coupling to the thermal bath, and show that it is a constant independent of the system size. To this end we refer to equation \ref{lower bound} which for the interaction Hamiltonian defined in equation \ref{interaction} recasts into the following form:
 \be \label{lowergap}
G \geq \frac{1}{2} e^{- \beta \Delta} h(\Delta) \min_\Gamma{ \bigg(\sum_{i} \la [\sigma_{x,i},\Gamma],[\sigma_{x,i},\Gamma]\ra+  \la [\sigma_{z,i},\Gamma],[\sigma_{z,i},\Gamma]\ra\bigg)}.
\ee
The minimization in Eq. (\ref{lowergap}) is performed over all of the observables. However, if one wants an observable to be a logical observable acting on the code space, it should commute with the Hamiltonian. Therefore, we do not need to do the minimization over all of the observables on Hilbert space explained in the previous section; the observables that commute with the Hamiltonian would suffies. One can restrict the domain of the minimization even more; the logical observables $Z$ and $X$ for one logical qubit should anti-commute and square to identity. Therefore, all of the observables of our interest belong to the following algebra:
 \be \label{obs1}
\mathcal{O} = \left(Z^{\mu_1}_1X^{\nu_1}_1\right)( Z^{\mu_2}_2X^{\nu_2}_2)( Z^{\mu_3}_3X^{\nu_3}_3) (Z^{\mu_4}_4X^{\nu_4}_4)(\mathcal{B}_p^z\otimes\mathcal{B}_p^x), \qquad \qquad \mu_i=0,1 \text \ {and} \ \nu_i=0,1,
\ee
where $\mathcal{B}_p^z$ and $\mathcal{B}_p^x$ are the algebras generated by all of the $z$ and $x$ type plaquette operators respectively and the minimization is over different possibilities of $\mu_i$'s and $\nu_i$'s and also the two algebras $\mathcal{B}_p^x$ and $\mathcal{B}_p^x$. By putting an observable $\Gamma \in \mathcal O$ (some observable of our interest), into equation \ref{lowergap}, one finds that:
\be
G \geq  \frac{1}{2} e^{- \beta \Delta} h(\Delta)\left( \min_{\Gamma_z} \big( \sum_{i} \la [\sigma_{x,i},\Gamma_z],[\sigma_{x,i},\Gamma_z]\ra\big) + \min_{\Gamma_x}  \big(\sum_{i} \la [\sigma_{z,i},\Gamma_x],[\sigma_{z,i},\Gamma_x]\ra\big) \right).
\ee
Here by $\Gamma_z$ and $\Gamma_x$ we mean operators belonging to the following sub-algebras, respectively:
\be
\mathcal{O_z} = (Z^{\mu_1}_1 Z^{\mu_2}_2 Z^{\mu_3}_3 Z^{\mu_4}_4) \ \mathcal{B}_p^z,
\ee
\be
\mathcal{O_x} = (X^{\nu_1}_1 X^{\nu_2}_2  X^{\nu_3}_3  X^{\nu_4}_4) \ \mathcal{B}_p^x.
\ee
Therefore, we have:
\be
G \geq \frac{1}{2} (g_x+g_z),
\ee
Where, $g_x$ ($g_z$) comes from $\sigma_{x}$ ($\sigma_{z}$) part of the interaction Hamiltonian:
\be \label{gx}
g_x = e^{- \beta \Delta} h(\Delta) \min_{\Gamma_z} \bigg(\sum_{i} \la [\sigma_{x,i},\Gamma_z],[\sigma_{x,i},\Gamma_z]\ra\bigg),
\ee
\be
g_z = e^{- \beta \Delta} h(\Delta) \min_{\Gamma_x}{ \bigg(\sum_{i}   \la [\sigma_{z,i},\Gamma_x],[\sigma_{z,i},\Gamma_x]\ra\bigg)}.
\ee
By symmetry we know that $g_x$ and $g_z$ are equal to each other. Thus, the lower bound of the gap reduces to $G \geq g_x$, i.e:
\be
 G \geq  e^{- \beta \Delta} h(\Delta) \min_{\Gamma_z} \bigg(\sum_{i} \la [\sigma_{x,i},\Gamma_z],[\sigma_{x,i},\Gamma_z]\ra\bigg).
\ee
To obtain a lower bound of the gap, one can partition the algebra of all of the observables into different sectors (different possibilities of $\mu_i$'s) and obtain a lower bound for each sector and at the end, perform the minimization over all of the sectors.

\subsubsection{\textbf{The sector of $\Gamma_z=Z_1 \mathcal{B}_p^z$ or $\Gamma_z=Z_2\mathcal{B}_p^z$}}
Let us show the minimum of the decay rate in this sector by $G_1$. By symmetry, we know that the effect of the external bath on $Z_1 \mathcal{B}_p^z$ is exactly the same as its effect on $Z_2\mathcal{B}_p^z$. Thus, it is enough to consider only one of them, say  $Z_1 \mathcal{B}_p^z$:
\be
G_1 \geq e^{- \beta \Delta} h(\Delta) \min_{\Gamma_z}{ \bigg(\sum_{i} \la [\sigma_{x,i},Z_1 \mathcal{B}_p^z],[\sigma_{x,i},Z_1 \mathcal{B}_p^z]\ra\bigg)}.
\ee
From now on, we use the notation used in \cite{Horodecki2}, and show the terms $\la [\sigma_{x,i},Z_1 \mathcal{B}_p^z],[\sigma_{x,i},Z_1 \mathcal{B}_p^z]\ra$ as $\mathcal{L}_{x,i}^1$ for example. Therefore, we have:
\bea  \label{qwe}
\sum_{i} \la [\sigma_{x,i},Z_1 \mathcal{B}_p^z],[\sigma_{x,i},Z_1 \mathcal{B}_p^z]\ra= \sum_i \mathcal{L}_{x,i}^1 &=&  \mathcal{L}_{x,q_1}^1+ \mathcal{L}_{x,q_2}^1+ \mathcal{L}_{x,q_3}^1+ \mathcal{L}_{x,q_4}^1+\mathcal{L}_{x,b_x}^1 \\ \nonumber
 &+& \mathcal{L}_{x,b_z}^1+ \mathcal{L}_{x,C_b}^1+ \mathcal{L}_{x,C_g}^1+ \mathcal{L}_{x,C_r}^1\\ 
 &\geq & \mathcal{L}_{x,C_b}^1+ \mathcal{L}_{x,C_g}^1+ \mathcal{L}_{x,C_r}^1 \nonumber,
\eea
where by $\mathcal{L}_{x,C_b}^1$ we mean $ \sum_{i \in C_b} \mathcal{L}_{x,i}^1$ and so on. To obtain the inequality in Eq. (\ref{qwe}), we have used the positivity of each term $\mathcal{L}_{x,i}$, which is proved in Appendix~\ref{appendix a}. Now we calculate the terms $\mathcal{L}_{x,C_b}, \mathcal{L}_{x,C_g}$ and $\mathcal{L}_{x,C_r}$ separately as follows:
\bea
\mathcal{L}_{x,C_b}^1&=& \sum_{i \in C_b} \la [\sigma_{x,i},Z_1\mathcal{B}_p^z],[\sigma_{x,i},Z_1 \mathcal{B}_p^z]\ra \\ \nonumber
&=& \sum_{i \in C_b}\la Z_1 [\sigma_{x,i}, \mathcal{B}_p^z],Z_1[\sigma_{x,i}, \mathcal{B}_p^z]\ra \\ \nonumber
&=& \sum_{i \in C_b} \text{tr}\bigg( \rho_\beta [\sigma_{x,i}, \mathcal{B}_p^z]^\dagger Z_1^\dagger Z_1 [\sigma_{x,i}, \mathcal{B}_p^z]\bigg) \\ \nonumber
&=& \sum_{i \in C_b}\la [\sigma_{x,i}, \mathcal{B}_p^z],[\sigma_{x,i}, \mathcal{B}_p^z]\ra,
\eea
where the second line is the consequence of the fact that $Z_1$ commutes with all of $\sigma_{x,i}, i \in C_b$ (figure \ref{lattice}b and figure \ref{excitations}b), because the support of $Z_1$ and $C_b$ do not have any common qubit.\\

One can further check that the same arguments as above also hold for $\mathcal{L}_{x,C_g}^1$. On the contrary, the term $\mathcal{L}_{x,C_r}^1$leads to a different result, since the support of $Z_1$ and  $C_r$  meet each other and therefore, $Z_1$ does not commute with $\sigma_{x,i}$'s, $i \in C_r$. However, as shown in equation \ref{equivalence} in TCC a non-trivial loop with a specific color, say red, is equivalent to the tensor product of two other non-trivial loops that have different colors, green and blue, but are in the same homology class as the red one. Therefore, one can write $Z_1$, (denoted as $Z_r$) as the tensor product of $Z_2$ (denoted as $Z_b$) and another logical operator that is defined on a green non-trivial loop (denoted as $Z_g$), i.e.:
\be
Z_r Z_b Z_g \sim 1, \qquad \qquad \qquad Z_r \sim Z_g Z_b,
\ee
up to some plaquette operators. Thus,
\be
Z_r \mathcal{B}_p^z=Z_g Z_b \mathcal{B}_p^z,
\ee
where, we have absorbed the extra plaquette operators, in the algebra of all of the plaquette operators. Because $Z_g Z_b$ does not meet $C_r$ at any point, the same result as $\mathcal{L}_{x,C_{g,b}}^1$ also holds for the last term $\mathcal{L}_{x,C_r}^1$. Suppose that $A$ and $B$ have their minimum values at $X_1$ and $X_2$ respectively, since $A+B$ would in general, have its minimum at $X_3$ which is different from $X_1$ and $X_2$, one arrives at:
\be \label{ineq}
\min_{X} ( A+B) \geq \min_X(A) +\min_X(B). 
\ee
By using this inequality the lower bound of the gap of Lindblad in this sector recasts into the following:
\bea
G_1 \geq 3 e^{- \beta \Delta} h(\Delta) \min_ {\Gamma_z} \mathcal{L}_{x,C_{b}}^1.
\eea
Here, $\Gamma_z \in \mathcal{B}_p^z$. Obtaining the gap of this new model is simpler, because by knowing the effect of the bath on this new model one can map it to a known model that its Lindblad gap is known. The new model is nothing but the Ising model. The reason is that in TCC when $\sigma_x$ is applied on one qubit, say qubit number $1$ in figure \ref{Ising}a,  it can create three excitations in three plaquettes that have this qubit in common, by acting another $\sigma_x$ on the next qubit, qubit number $2$, two of these excitations will be annihilated and a new one can be created in the next blue plaquette (figure \ref{Ising}a). On the other hand consider another model, one dimensional Ising model with in-homogeneous couplings as follows:

\begin{figure}[t]
\begin{center}
\includegraphics[scale=0.45]{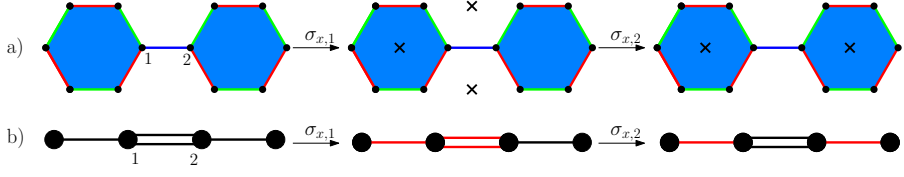}
\caption{(Color Online): Equivalence of the process of creation of excitation in the blue sub-lattice in the TCC , as shown in (a), and the in-homogeneous Ising model, as shown in (b).}\label{Ising}
\end{center}
\end{figure} 

\be \label{Is}
H_{\text{Ising}} = -J \sum_{i= \text{odd}} \sigma_{z,i} \sigma_{z,i+1}- 2J \sum_{i=\text{even}} \sigma_{z,i} \sigma_{z,i+1},
\ee

where $J$ is the coupling constant. In this model if $\sigma_x$ is applied to qubit number $1$ in figure \ref{Ising}b, it excites two of the bonds, however since one of the coupling strength is twice the other one, one can suppose that this external perturbation creates three excitations with the same energy (the excited bonds are shown with red color in figure \ref{Ising}b). By applying another $\sigma_x$ on qubit number $2$  the bond with two excitations is not excited any more, but another bond with one excitation can be excited. Thus, the process of creation and annihilation of excitation in the TCC is exactly what happens in the Ising model defined in equation \ref{Is}. It has been shown in \cite{Horodecki2} that the Ising model with arbitrary coupling is not a stable memory against thermal fluctuations, since the gap of Lindblad super-operator for this model is a constant independent of the system size. Therefore, one can conclude that for this sector of observables, the minimum of the decay rate is the following constant, which is independent of the system size:
\be
G_1 \geq 3  e^{- \beta \Delta} h(\Delta) (\text{Gap}(\mathcal{L}))_{\text{Ising}}.
\ee
\subsubsection{\textbf{The sector of $\Gamma_z=\mathcal{B}_p^z$}}

Let us show the minimum of the decay rate in this sector by $G_2$:
\be
G_2 \geq  e^{- \beta \Delta} h(\Delta) \min_{\Gamma_z}{ \bigg(\sum_{i} \la [\sigma_{x,i}, \mathcal{B}_p^z],[\sigma_{x,i}, \mathcal{B}_p^z]\ra\bigg)}.
\ee
As the previous case, the terms like $ \la [\sigma_{x,i}, \mathcal{B}_p^z],[\sigma_{x,i}, \mathcal{B}_p^z]\ra$ are shown as $\mathcal{L}_{x,i}^2$. Therefore, we have:
\bea \nonumber
\sum_{i} \la [\sigma_{x,i}, \mathcal{B}_p^z],[\sigma_{x,i}, \mathcal{B}_p^z]\ra= \sum_i \mathcal{L}_{x,i}^2 &=&  \mathcal{L}_{x,q_1}^2+ \mathcal{L}_{x,q_2}^2+ \mathcal{L}_{x,q_3}^2+ \mathcal{L}_{x,q_4}^2 \\ \nonumber
 &+& \mathcal{L}_{x,b_x}^2+ \mathcal{L}_{x,b_z}^2+ \mathcal{L}_{x,C_b}^2+ \mathcal{L}_{x,C_g}^2+ \mathcal{L}_{x,C_r}^2\\ 
 &\geq & \mathcal{L}_{x,C_b}^2+ \mathcal{L}_{x,C_g}^2+ \mathcal{L}_{x,C_r}^2.
\eea
By symmetry we know that the three terms $\mathcal{L}_{x,C_b}^2$, $\mathcal{L}_{x,C_g}^2$ and $\mathcal{L}_{x,C_r}^2$ are equal to each other. Thus, the the minimum decay rate recasts into the following form:
\be
G_2 \geq 3 e^{- \beta \Delta} h(\Delta) \min_{\Gamma_z} \mathcal{L}_{x,C_{b}}^2,
\ee
which is exactly what was discussed in previous section and was equal to the gap of Linddblad super-operator for the Ising model defined in equation \ref{Is}. Therefore, we have:
\be
G_2 \geq 3  e^{- \beta \Delta} h(\Delta) (\text{Gap}(\mathcal{L}))_{\text{Ising}}.
\ee

\subsubsection{\textbf{ The sector of $\Gamma_z=Z_3\mathcal{B}_p^z$ or $\Gamma_z=Z_4\mathcal{B}_p^z$}}
Let us show the minimum of the decay rate in this sector by $G_3$. By symmetry we know that the effect of the external bath on $Z_3 \mathcal{B}_p^z$ is exactly the same as the effect of the bath on  $Z_4\mathcal{B}_p^z$. Therefore, it is enough to consider only one of them, say  $Z_3 \mathcal{B}_p^z$:
\be
G_3 \geq e^{- \beta \Delta} h(\Delta) \min_{\Gamma_z}{ \bigg(\sum_{i} \la [\sigma_{x,i},Z_3 \mathcal{B}_p^z],[\sigma_{x,i},Z_3 \mathcal{B}_p^z]\ra\bigg)}.
\ee
In this sector the terms like $\la [\sigma_{x,i},Z_3 \mathcal{B}_p^z],[\sigma_{x,i},Z_3 \mathcal{B}_p^z]\ra$ are represented by $\mathcal{L}_{x,i}^3$. Therefore, we have:
\bea \label{eq:G3min} \nonumber
G_3 &\geq &  e^{- \beta \Delta} h(\Delta)\min_{\Gamma_z} \sum_i \mathcal{L}_{x,i}^3 = e^{- \beta \Delta} h(\Delta)\min_{\Gamma_z} \bigg( \mathcal{L}_{x,q_1}^3+ \mathcal{L}_{x,q_2}^3+ \mathcal{L}_{x,q_3}^3+ \mathcal{L}_{x,q_4}^3 \\ \nonumber
&+& \mathcal{L}_{x,b_x}^3+ \mathcal{L}_{x,b_z}^3+ \mathcal{L}_{x,C_b}^3+ \mathcal{L}_{x,C_g}^3+ \mathcal{L}_{x,C_r}^3 \bigg)\\ 
&\geq & e^{- \beta \Delta} h(\Delta) \min_{\Gamma_z} \bigg( \mathcal{L}_{x,C_b}^3+ \mathcal{L}_{x,C_g}^3+ \mathcal{L}_{x,C_r}^3 \bigg) \nonumber \\ 
&\geq& e^{- \beta \Delta} h(\Delta)\bigg( \min_{\Gamma_z} (\mathcal{L}_{x,C_b}^3)+ \min_{\Gamma_z}( \mathcal{L}_{x,C_g}^3)+  \min_{\Gamma_z}(\mathcal{L}_{x,C_r}^3) \bigg).
\eea
Now we calculate the terms separately as follows:
\be
 \min_{\Gamma_z} \mathcal{L}_{x,C_r}^3=  \min_{\Gamma_z}{ \bigg(\sum_{i \in C_r} \la [\sigma_{x,i},Z_3 \mathcal{B}_p^z],[\sigma_{x,i},Z_3 \mathcal{B}_p^z]\ra\bigg)}.
\ee
Since $Z_3$'s support is the closed red string and $C_r$ is also a red string, but in the other homological class, they do not meet each other at any point. Thus, $Z_3$ and $\sigma_{x,i}$, $i\in C_r$ commute with each other and the above equation reduces to:
\be 
\min_{\Gamma_z} \mathcal{L}_{x,C_r}^3=  \min_{\Gamma_z}{ \bigg(\sum_{i \in C_r} \la [\sigma_{x,i}, \mathcal{B}_p^z],[\sigma_{x,i}, \mathcal{B}_p^z]\ra\bigg)},
\ee
which is exactly what was discussed in the first case, and is equal to the gap of the Lindblad super-operator for the Ising model defined in equation \ref{Is}. The only quantities left to be obtained are $\mathcal{L}_{x,C_b}$ and  $\mathcal{L}_{x,C_g}$, which by symmetry we know to be equal to each other. Thus, it is sufficient to consider only one of them:
\be
\min_{\Gamma_z} \mathcal{L}_{x,C_{b}}^3  =  \min_{\Gamma_z}{ \bigg(\sum_{i \in C_{b}} \la [\sigma_{x,i},Z_3 \mathcal{B}_p^z],[\sigma_{x,i},Z_3 \mathcal{B}_p^z]\ra\bigg)}.
\ee
We can split the summation $\sum_{i \in C_{b}}$ in two parts, the qubits that lie on $Z_3$ loop and the qubits that do not lie on $Z_3$ loop:
\be
\mathcal{L}_{x,C_{b}}^3=  { \bigg(\sum_{i \in C_{b}, i\notin Z_3} \la [\sigma_{x,i},Z_3 \mathcal{B}_p^z],[\sigma_{x,i},Z_3 \mathcal{B}_p^z]\ra\bigg)} + { \bigg(\sum_{i \in C_{b}, i\in Z_3} \la [\sigma_{x,i},Z_3 \mathcal{B}_p^z],[\sigma_{x,i},Z_3 \mathcal{B}_p^z]\ra\bigg)}.
\ee
We use Lemma 2 in \cite{Horodecki2} which indicates that a lower bound of the gap of a super-operator like $\mathcal{S}$ that can be written as summation of two other super-operators, i.e. $\mathcal{S}=\mathcal{A}+\mathcal{B}$, is given by:
\be\label{56}
\text{Gap}(\mathcal{S})  \geq \frac{\text{Gap}(\mathcal{A}) \la \ker(\mathcal{A}), \mathcal{B}(\ker(\mathcal{A})) \ra}{\text{Gap}(\mathcal{A}) + ||\mathcal{B}||}.
\ee
We take $\mathcal{A}$ and $\mathcal{B}$ to be the Lindblad super-operator when the bath does not have any effect on the qubits lying on $C_3$ and when the bath is applied only on the qubits lying on $C_3$, respectively. Therefore, for the gap of $\mathcal{A}$ we have:
 \bea
\text{Gap}(\mathcal{A})&=&  \min_{\Gamma_z}{ \bigg(\sum_{i \in C_{b}, i\notin Z_3} \la [\sigma_{x,i},Z_3 \mathcal{B}_p^z],[\sigma_{x,i},Z_3 \mathcal{B}_p^z]\ra\bigg)} \\ \nonumber
 &=&  \min_{\Gamma_z}{ \bigg(\sum_{i \in C_{b}, i\notin Z_3} \la [\sigma_{x,i}, \mathcal{B}_p^z],[\sigma_{x,i}, \mathcal{B}_p^z]\ra\bigg)}.
\eea
One should note that the expression in the second line is not the gap of Ising model, because perturbations from the bath are not applied to all of the qubits lying on $C_3$. Nevertheless, one can see that it is the gap of a one-dimensional Ising model whose qubita are missing at some of the points. The number of these points is $\frac{|C|}{2}$, where $|C|$ is the length of $C_3$ defined as the number of qubits that lie on $C_3$ loop. Note that there are $\frac{|C|}{2}$ distinct blue sub-lattices in between every two adjacent missing points. Therefore Using inequality \ref{ineq}, one can expand $\text{Gap}(\mathcal{A})$, as a summation of $\frac{|C|}{2}$ terms, where each term is equal to the minimum of the decay rate for each of these $\frac{|C|}{2}$ Ising models that are defined on one of the $\frac{|C|}{2}$ aforementioned sub-lattices. Because the minimum of the decay rate for these $\frac{|C|}{2}$ Ising models are equal to each other, therefore one obtains:

\bea \label{58}
\text{Gap}(\mathcal{A}) &\geq& \frac{|C|}{2} \min_{\Gamma_z}{ \bigg(\sum_{i \in C^1_{b}, i\notin Z_3} \la [\sigma_{x,i}, \mathcal{B}_p^z],[\sigma_{x,i}, \mathcal{B}_p^z]\ra\bigg)} \\ \nonumber
 & \geq & 0
 \eea
 where, by $C_b^1$ we mean that the bath is applying on qubits that lie on one of these $\frac{|C|}{2}$ blue sub-lattices. The reason that the lower bound for the gap of one of these Ising models is $0$, is that one can find an observable belonging to $\mathcal{B}_p^z$, such that the support of this observable do not have any point in common with the vertices of one of these $\frac{|C|}{2}$ blue sub-lattices.

Therefore from Eqs. (\ref{56})-(\ref{58}), one finds the following lower bound for the gap of $\mathcal{L}_{x,C_{b}}^3$ in the present sector:
\be \label{g3}
\mathcal{L}_{x,C_{b}}^3=\mathcal{L}_{x,C_{g}}^3 \geq 0.
\ee
Using Eqs. (\ref{eq:G3min})-(\ref{g3}), the minimum of the decay rate for this sector is given by:
\be
G_3 \geq \mathcal{L}_{x,C_{r}}^3+\mathcal{L}_{x,C_{b}}^3+\mathcal{L}_{x,C_{g}}^3 \geq e^{- \beta \Delta} h(\Delta)  \text{Gap}_{\text{Ising}}.
\ee

\subsubsection{\textbf{The sector of ${\Gamma_z}=Z_1Z_3\mathcal{B}_p^z$ or ${\Gamma_z}=Z_1Z_4\mathcal{B}_p^z$ or ${\Gamma_z}=Z_2Z_3\mathcal{B}_p^z$ or ${\Gamma_z}=Z_2 Z_4\mathcal{B}_p^z$  }}
The minimum of the decay rate in this sector is shown by $G_4$. Using the inequality \ref{ineq} and the notation $\mathcal{L}_{C_{b,g,r}}^4 = \sum_{i \in C_{b,g,r}}\la [\sigma_{x,i},Z_1Z_3 \mathcal{B}_p^z],[\sigma_{x,i},Z_1Z_3 \mathcal{B}_p^z]\ra$ we obtain:
\be
G_4  \geq e^{- \beta \Delta} h(\Delta)\bigg( \min_{\Gamma_z} (\mathcal{L}_{x,C_b}^4)+ \min_{\Gamma_z}( \mathcal{L}_{x,C_g}^4)+  \min_{\Gamma_z}(\mathcal{L}_{x,C_r}^4) \bigg).
\ee

Now we calculate the terms $\mathcal{L}_{x,C_b}^4, \mathcal{L}_{x,C_g}^4$ and $\mathcal{L}_{x,C_r}^4$ separately as follows:
\bea
\mathcal{L}_{x,C_b}^4&=& \sum_{i \in C_b} \la [\sigma_{x,i},Z_1Z_3\mathcal{B}_p^z],[\sigma_{x,i},Z_1Z_3 \mathcal{B}_p^z]\ra \\ \nonumber
&=& \sum_{i \in C_b}\la [\sigma_{x,i}, Z_3\mathcal{B}_p^z],[\sigma_{x,i}, Z_3\mathcal{B}_p^z]\ra,
\eea
where the second line is the consequence of commutativity of $Z_1$ with all of $\sigma_{x,i}, i \in C_b$.  In addition, as we proved in the previous case, a lower bound of this quantity is given by equation \ref{g3} and by symmetry, it is equal to the lower bound of $\mathcal{L}_{x,C_g}^4$. The only term that remains to be obtained is $\mathcal{L}_{x,C_r}^4$. One finds that:

\bea
\mathcal{L}_{x,C_r}^4&=& \sum_{i \in C_r} \la [\sigma_{x,i},Z_1Z_3\mathcal{B}_p^z],[\sigma_{x,i},Z_1Z_3 \mathcal{B}_p^z]\ra \\ \nonumber
&=& \sum_{i \in C_r}\la [\sigma_{x,i}, Z_1\mathcal{B}_p^z],[\sigma_{x,i}, Z_1\mathcal{B}_p^z]\ra \\ \nonumber
&=&\sum_{i \in C_r}\la [\sigma_{x,i}, Z_2 Z_g\mathcal{B}_p^z],[\sigma_{x,i}, Z_2Z_g\mathcal{B}_p^z]\ra \\ \nonumber
&=&\sum_{i \in C_r}\la [\sigma_{x,i},\mathcal{B}_p^z],[\sigma_{x,i},\mathcal{B}_p^z]\ra,
\eea
which is what we discussed in the first case and is equal to the gap of Ising model. Therefore, a lower bound of $G_4$ is given by:

\be
G_4 \geq  e^{- \beta \Delta} h(\Delta)  \text{Gap}_{\text{Ising}} .
\ee
To obtain the gap of $\mathcal{L}$ it is enough to consider only the above sectors of the observables, since each of the other sectors is equivalent to one of the four mentioned cases, and this is straightforward to check. Therefore, the minimum of the decay rate in all of the sectors can be obtained by doing minimization only over these four sectors. Finally, we arrive at our key theorem:

\begin{theorem*}
The gap of Lindblad super-operator for Topological Color Code due to the coupling to a thermal bath is given by:
\be \label{theorem}
(\text{Gap} (\mathcal{L}))_{\text{TCC}} \geq e^{- \beta \Delta} h(\Delta) (\text{Gap} (\mathcal{L}))_{\text{Ising}}.
\ee
\end{theorem*}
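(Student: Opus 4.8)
The proof of the theorem is essentially an orchestration of the sector-by-sector analysis already carried out, so the plan is to assemble the bounds on the individual sectors into a single lower bound on $\text{Gap}(-\mathcal{L})$. First I would recall the reduction established earlier: by Eq.~(\ref{lowergap}) the gap is bounded below by $\tfrac{1}{2}e^{-\beta\Delta}h(\Delta)$ times a minimum over observables $\Gamma$ of $\sum_i\big(\la[\sigma_{x,i},\Gamma],[\sigma_{x,i},\Gamma]\ra+\la[\sigma_{z,i},\Gamma],[\sigma_{z,i},\Gamma]\ra\big)$, and that this minimization need only be performed over the algebra $\mathcal{O}$ of Eq.~(\ref{obs1}) of logical-times-plaquette observables, since only these can serve as encoded observables commuting with $H^{\text{sys}}$. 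The $\sigma_x$ and $\sigma_z$ pieces decouple onto the sub-algebras $\mathcal{O}_z$ and $\mathcal{O}_x$, giving $G\geq\tfrac12(g_x+g_z)=g_x$ by the $x\!\leftrightarrow\!z$ duality of the TCC, so it suffices to lower-bound $g_x=e^{-\beta\Delta}h(\Delta)\min_{\Gamma_z}\sum_i\la[\sigma_{x,i},\Gamma_z],[\sigma_{x,i},\Gamma_z]\ra$ over $\Gamma_z\in\mathcal{O}_z$.

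Next I would partition $\mathcal{O}_z$ into sectors labelled by the logical content $Z_1^{\mu_1}Z_2^{\mu_2}Z_3^{\mu_3}Z_4^{\mu_4}$, and invoke the four representative sector computations already done: the sector $Z_1\mathcal{B}_p^z$ (equivalently $Z_2\mathcal{B}_p^z$) giving $G_1\geq 3e^{-\beta\Delta}h(\Delta)(\text{Gap}(\mathcal{L}))_{\text{Ising}}$; the trivial sector $\mathcal{B}_p^z$ giving the same bound $G_2$; the sector $Z_3\mathcal{B}_p^z$ (or $Z_4\mathcal{B}_p^z$) giving $G_3\geq e^{-\beta\Delta}h(\Delta)(\text{Gap}(\mathcal{L}))_{\text{Ising}}$; and the mixed sector $Z_1Z_3\mathcal{B}_p^z$ (and its three relabellings) giving $G_4\geq e^{-\beta\Delta}h(\Delta)(\text{Gap}(\mathcal{L}))_{\text{Ising}}$. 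The key observation to state explicitly is that every remaining choice of $(\mu_1,\mu_2,\mu_3,\mu_4)$ is, via the homological equivalence $Z_r\sim Z_gZ_b$ of Eq.~(\ref{equivalence}) and absorption of plaquette operators into $\mathcal{B}_p^z$, equivalent to one of these four representatives; hence $\min_{\Gamma_z}\sum_i\mathcal{L}_{x,i}=\min(G_1,G_2,G_3,G_4)/\big(e^{-\beta\Delta}h(\Delta)\big)$ up to the prefactor, and in all four cases this is at least $(\text{Gap}(\mathcal{L}))_{\text{Ising}}$. Taking the worst sector yields $g_x\geq e^{-\beta\Delta}h(\Delta)(\text{Gap}(\mathcal{L}))_{\text{Ising}}$, and therefore $(\text{Gap}(\mathcal{L}))_{\text{TCC}}\geq g_x\geq e^{-\beta\Delta}h(\Delta)(\text{Gap}(\mathcal{L}))_{\text{Ising}}$, which is the claimed inequality~(\ref{theorem}).

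The main obstacle I anticipate is the exhaustiveness claim — verifying that the sixteen logical sectors genuinely collapse to the four analyzed cases. This requires care: one must check that the mapping to an Ising chain survives in each sector, i.e.\ that for any logical string $\Gamma_z$ there is always a colored global string ($C_r$, $C_b$, or $C_g$) along which $\Gamma_z$ either commutes with all $\sigma_{x,i}$ outright or can be rewritten (using $Z_r\sim Z_gZ_b$ and absorbing plaquettes) so that it does. The cases where the logical operator runs parallel to a global string in the same homology class but different color (the $G_3$ analysis) are the delicate ones, since there the gap of the restricted Ising-with-vacancies model drops to zero along $C_b,C_g$ and survivability of the constant bound rests entirely on the single transverse string $C_r$. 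I would therefore spend most of the effort on a clean case-check that, after exploiting all TCC relations, no sector can lose the last surviving Ising contribution. Once that is secured, the rest is just $\min$-bookkeeping, and since $(\text{Gap}(\mathcal{L}))_{\text{Ising}}$ is a system-size-independent constant by the result of~\cite{Horodecki2}, the conclusion that the TCC is thermally unstable follows immediately from the instability criterion of Sec.~\ref{s&ins}.
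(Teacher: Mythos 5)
Your proposal follows essentially the same route as the paper: the reduction via Eq.~(\ref{lowergap}) to the logical algebra $\mathcal{O}_z$, the $x\leftrightarrow z$ symmetry giving $G\geq g_x$, the partition into the four representative sectors with the bounds $G_1,G_2\geq 3e^{-\beta\Delta}h(\Delta)(\text{Gap}(\mathcal{L}))_{\text{Ising}}$ and $G_3,G_4\geq e^{-\beta\Delta}h(\Delta)(\text{Gap}(\mathcal{L}))_{\text{Ising}}$, and the final minimum over sectors. Your added emphasis on explicitly verifying that all sixteen logical sectors collapse to the four representatives is a point the paper only asserts as ``straightforward to check,'' but it does not change the argument.
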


\section{Conclusion}\label{conclusion}
In this work, we have studied thermal stability of the Topological Color Code in presence of a thermal bath of the form \ref{interaction}. To this end, we have studied the Lindblad evolution of the observables in the Hisenberg picture and their auto-correlation functions. The observables that we studied commute with the Hamiltonian in order to be regarded as  logical operators acting on the code space. We obtain a lower bound of the decay rate of these observables as follows:
\be
\la X,X(t) \ra \leq e^{- \epsilon t}  \la X,X \ra,
\ee
where
\be
 \epsilon \geq  e^{- \beta \Delta} h(\Delta) (\text{Gap} (\mathcal{L}))_{\text{Ising}},
\ee
$(\text{Gap} (\mathcal{L}))_{\text{Ising}}$ turns out to be a constant independent of the system size \cite{Horodecki2} and $\Delta$ is the gap of the TCC model which is equal to $6J$. Our result means that the auto-correlation function of the observables decreases exponentially in time faster than an exponential with a constant decay rate independent of the system size, i.e. by increasing the system size one cannot decrease the decay rate to make the memory stable. Thus, in a finite time the auto-correlation function becomes very small and the encoded information will be lost. Therefore, we can conclude that the Topological Color Code is unstable against thermal fluctuations from the bath at finite temperature, even though it is stable at $T=0$ against local quantum perturbations.

 Although the conclusion about the thermal instability of the Color Code is the same as that of the Kitaev code, notice however that the derivation of this new result is very different in the case of Color Code from the case of Kitaev model. For example in Kitaev model excitations appear in pairs as apposed to Color Code, in which excitations do not appear necessarily in pairs. Moreover, in the Kitaev model, to have all possible excitations, one should apply tensor products of $\sigma_x$'s ($\sigma_z$'s) over qubits belonging to the subsets of the snake (comb), on the ground states (for further details see \cite{Horodecki2}). This is in contrast to the Color Code where, to have all possible excitations, one should apply a completely different procedure using the concepts of open strings and the branching points as defined in Sec. \ref{Exc}. As explained in Sec. \ref{Obs}, the generators of the observables for any stabilizer Hamiltonian are the stabilizers which are in the Hamiltonian as well as the generators needed for creating all kind of excitations. Apart from the difference of the stabilizers in the two models, because the generators needed for creating all kinds of excitations in the case of Color Code are different from that of Kitaev, one can conclude that the generators needed to have all observables in the case of Color Code are different from the Kitaev. The last distinctive point is that the process of creation of the excitations caused by the external bath in the TCC can be mapped to the corresponding process in-homogeneous one-dimensional Ising model, in contrast to the case of the Kitaev model which can be mapped to the one-dimensional homogeneous Ising model \cite{Horodecki2}.

The impact of these results goes beyond the field of quantum computation and 
extends to the new emerging field of topological orders in condensed matter system
(strongly correlated spins). In fact, it is known that two-body Hamiltonians in 2D lattices
can give rise to Topological Color Codes in the low-energy sector for certain regimes
of the couplings \cite{color2body1,color2body2}. These topological orders are expected to suffer from thermal instabilities
as well.

\section{Acknowledgement}
The author wishes to thank M.A. Martin-Delgado and Markus M\"uller for fruitful discussions. This work has been done during the author's stay at Complutense University of Madrid and the author wishes to thank the Department of Physics of Complutense University for hospitality and partial financial support. The author also would like to thank V. Karimipour for introducing her to M.A. Martin-Delgado's research group and also reading the manuscript. The author also wishes to thank National Elites Foundation of Iran for partial financial support.

\appendix
\section{Negativity of Lindblad super-operator}\label{appendix a}

\begin{lemma*}
The Lindblad super-operator is negative which means that:
\be
-\la X, \mathcal{L}(X)\ra \geq 0, \qquad  \forall  X.
\ee
\end{lemma*}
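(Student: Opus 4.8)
The plan is to exhibit $-\langle X,\mathcal{L}(X)\rangle$ as a sum of manifestly non-negative quadratic forms. The route I would take is: (i) put $\mathcal{L}$ into standard Gorini--Kossakowski--Sudarshan--Lindblad (GKSL) form; (ii) reduce to a single dissipative channel; (iii) use the detailed-balance (KMS) structure of the Davies operators to rewrite each channel's contribution as a positive combination of squared commutator norms.

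For step (i), rewrite the $\omega\geq0$ sum in Eq.~(\ref{lindblad}) as a sum over all Bohr frequencies. Expanding the commutators,
\[
S_\alpha^\dagger(\omega)[X,S_\alpha(\omega)]+[S_\alpha^\dagger(\omega),X]\,S_\alpha(\omega)=2\,S_\alpha^\dagger(\omega)\,X\,S_\alpha(\omega)-\{S_\alpha^\dagger(\omega)S_\alpha(\omega),X\},
\]
and analogously for the two $e^{-\beta\omega}$ terms with $S_\alpha(\omega)$ and $S_\alpha^\dagger(\omega)$ interchanged. Using $S_\alpha(-\omega)=S_\alpha^\dagger(\omega)$ and $h_\alpha(-\omega)=e^{-\beta\omega}h_\alpha(\omega)$, the $e^{-\beta\omega}$ terms at frequency $\omega$ are exactly the $\omega\to-\omega$ copies of the first pair, so
\[
\mathcal{L}(X)=\sum_\alpha\sum_{\omega}h_\alpha(\omega)\left(S_\alpha^\dagger(\omega)\,X\,S_\alpha(\omega)-\tfrac12\{S_\alpha^\dagger(\omega)S_\alpha(\omega),X\}\right),
\]
with the sum now over all $\omega$ and every rate $h_\alpha(\omega)\geq0$, since $h_\alpha$ is the Fourier transform of an auto-correlation function and hence of positive type. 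Because the claim is additive over $(\alpha,\omega)$, for step (ii) it suffices to prove $-\langle X,\mathcal{L}_{\alpha,\omega}(X)\rangle\geq0$ for the generator built from a single conjugate pair $S:=S_\alpha(\omega)$, $S^\dagger$; by the self-adjointness of $\mathcal{L}$ already recorded above one may also take $X$ Hermitian.

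For step (iii) I would use two structural facts about the Davies operators: (a) $S^\dagger S$ is block-diagonal in the energy eigenspaces, so it commutes with $H^{\text{sys}}$ and hence with $\rho_\beta$ (this is what makes the cyclic rearrangements below legitimate); and (b) the KMS relation $\rho_\beta\,S=e^{\beta\omega}\,S\,\rho_\beta$, which follows from $[H^{\text{sys}},S]=-\omega\,S$. Expanding $-\langle X,\mathcal{L}_{\alpha,\omega}(X)\rangle=-\text{tr}\!\big(\rho_\beta X\,\mathcal{L}_{\alpha,\omega}(X)\big)$ term by term and moving $\rho_\beta$ past the $S$'s and $S^\dagger$'s using (a) and (b), the factors $e^{-\beta\omega}$ (from the rate $h_\alpha(-\omega)$) and $e^{\beta\omega}$ (from KMS) conspire to cancel the mixed terms of the $S$-part against those of the $S^\dagger$-part, and what remains reassembles into
\[
-\langle X,\mathcal{L}_{\alpha,\omega}(X)\rangle=\tfrac12\,h_\alpha(\omega)\,\big\langle[S,X],[S,X]\big\rangle+\tfrac12\,h_\alpha(-\omega)\,\big\langle[S^\dagger,X],[S^\dagger,X]\big\rangle,
\]
i.e.\ a sum of terms of the form $\tfrac12\,c\,\text{tr}\!\big(\rho_\beta\,[L,X]^\dagger[L,X]\big)$ with $c\geq0$, each non-negative because $\rho_\beta\geq0$. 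Summing over $\alpha$ and $\omega\geq0$ gives $-\langle X,\mathcal{L}(X)\rangle\geq0$; the identity obtained along the way is exactly the one used as Eq.~(\ref{positive}) in Sec.~\ref{s&ins}.

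The main obstacle I anticipate is the bookkeeping in step (iii): one must insert the KMS factor and the relation $h_\alpha(-\omega)=e^{-\beta\omega}h_\alpha(\omega)$ at precisely the right places, watch the adjoints inside the trace, and keep straight that the cyclic manipulations are permitted only because $\rho_\beta$ commutes with $S^\dagger S$ and with the spectral projectors $\Pi_\epsilon$. As an alternative that avoids this computation, one could argue abstractly that $e^{t\mathcal{L}}$ is a unital completely positive semigroup for which $\rho_\beta$ is a stationary state of the Schr\"odinger-picture dual (again by detailed balance), so that $e^{t\mathcal{L}}$ is a contraction on $L^2(\rho_\beta)$, and differentiating at $t=0$ gives $\text{Re}\,\langle X,\mathcal{L}(X)\rangle\leq0$, which together with the self-adjointness of $\mathcal{L}$ yields $\langle X,\mathcal{L}(X)\rangle\leq0$. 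I would nevertheless prefer the direct route, because it simultaneously produces Eq.~(\ref{positive}). Everything else --- the passage to GKSL form and the positivity of the individual squared-commutator terms --- is routine.
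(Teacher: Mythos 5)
Your proposal is correct, and its core --- splitting $\mathcal{L}$ into single channels $\mathcal{L}_{\alpha,\omega}$, using $h_\alpha(\omega)\geq 0$, and invoking the KMS relation $\rho_\beta S_\alpha(\omega)=e^{\beta\omega}S_\alpha(\omega)\rho_\beta$ to reassemble $-\la X,\mathcal{L}_{\alpha,\omega}(X)\ra$ into the two squared-commutator terms of Eq.~(\ref{positive}) --- is exactly the route the paper takes. Where you genuinely diverge is the final step. The paper, to show $\la [S_\alpha(\omega),X],[S_\alpha(\omega),X]\ra\geq 0$, inserts the explicit TCC form of the Davies operators ($S_{x,i}(6J)=\sigma_{x,i}\Pi_{3J}$, etc.), expands $\Pi_{3J}\rho_\beta$ in energy eigenstates, diagonalizes $A=[\sigma_{x,i},X]$, and exhibits the trace as $\sum_{m,a}\lambda_m|\psi_{m,a}|^2$. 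You instead observe that $\la A,A\ra_\beta=\text{tr}(\rho_\beta A^\dagger A)=\|A\rho_\beta^{1/2}\|_{HS}^2\geq 0$ for \emph{any} operator $A$, simply because $\rho_\beta\geq 0$. Your version is both more elementary and strictly more general: it makes no reference to the TCC spectrum or to the structure of the jump operators, and it renders the paper's closing remark (``these arguments are not particular for TCC'') automatic rather than something to be checked model by model; it also sidesteps the need to reduce to Hermitian $X$, since the Hilbert--Schmidt argument works for arbitrary $A$. The alternative contraction-semigroup argument you sketch is also sound but, as you note, would not deliver the explicit Dirichlet-form identity that the rest of the paper relies on, so your preference for the direct computation is the right call.
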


\begin{proof}
In order to prove the positivity of $- \la X, \mathcal{L}(X)\ra$ we use the definition of $\mathcal{L}(X)$:
\bea
-\mathcal{L}(X)&=& - \sum_\alpha \sum_{\omega \geq 0} \mathcal{L}_{\alpha,\omega} (X)\\ \nonumber
&=& \frac{1}{2} \sum_\alpha \sum_{\omega \geq 0} h_\alpha(\omega) \Bigg( S^\dagger_\alpha(\omega) [ S_\alpha(\omega),X]+[X,S^\dagger_\alpha(\omega)] S_\alpha(\omega) \\ \nonumber
&+& e^{- \beta \omega} S_\alpha(\omega) [ S^\dagger_\alpha(\omega),X] +e^{- \beta \omega} [X,S_\alpha(\omega) ] S^\dagger_\alpha(\omega)\Bigg). \nonumber
\eea
Since $\frac{1}{2} h_\alpha(\omega)$'s are positive, we prove the positivity of each term $- \la X, \mathcal{L}_{\alpha,\omega}(X)\ra$:
\bea
 - \la X, \mathcal{L}_{\alpha,\omega}(X)\ra&=& \la X, S^\dagger_\alpha(\omega) [S_\alpha(\omega),X]+[X,S^\dagger_\alpha(\omega)] S_\alpha(\omega) \ra \\ \nonumber
 &+& e^{- \beta \omega} \la X, S_\alpha(\omega) [ S^\dagger_\alpha(\omega),X] + [X,S_\alpha(\omega) ] S^\dagger_\alpha(\omega) \ra.
\eea
Expanding the above equation and using the relation $\rho_\beta S_\alpha(\omega) = e^{\beta \omega} S_\alpha(\omega) \rho_\beta$, one can find that:
\be \label{positive}
- \la X, \mathcal{L}_{\alpha,\omega}(X)\ra= \la [S_\alpha(\omega),X],[S_\alpha(\omega),X]\ra+ e^{-\beta \omega} \la [S^\dagger_\alpha(\omega),X],[S^\dagger_\alpha(\omega),X]\ra.
\ee
In order to prove the positivity of $\la [S_\alpha(\omega),X],[S_\alpha(\omega),X]\ra$ one needs to use the explicit form of $S_\alpha(\omega)$, which for $x$ type errors is as follows:

\begin{figure}[t]
\begin{center}
\includegraphics[scale=0.32]{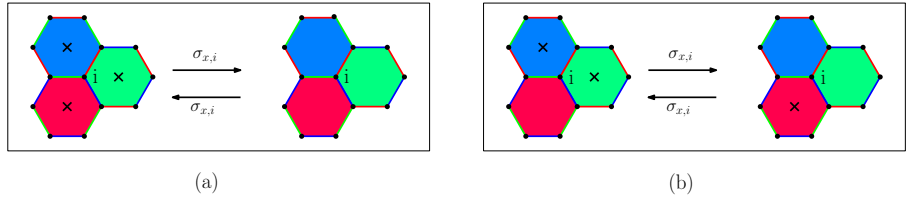}
\caption{(Color Online): Two possibilities of creation and annihilation of excitations in TCC due to existence of a thermal bath. a) Creation and annihilation of three excitations. b) Creation of one excitation and annihilation of two, and vice versa.}\label{bath}
\end{center}
\end{figure}

 \begin{itemize}
 \item \textbf{Annihilation of three excitations}: If there are three excitations in the three plaquettes that have this qubit in common, $\sigma_{x,i}$ annihilates all of them (figure \ref{bath}a). Thus, the Fourier transform of $\sigma_{x,i}$ is given by:
\be
S_{x,i} (\omega=6J)= \Pi_{-3J} \sigma_{x,i} \Pi_{3J}= \frac{1}{8}\sigma_{x,i} (1-B^z_p) (1-B^z_{p^\prime}) (1-B^z_{p^{\prime\prime}}).
\ee
Here, $p$, $p^\prime$ and $p^{\prime \prime}$ are the three plaquettes that have the $i^{th}$ qubit in common and  $\Pi_{3J}$ denotes the projector onto the sub-space with three excitations and $\Pi_{-3J}$ denotes the projector onto the sub-space with no excitation in these three plaquettes.
 \item \textbf{Creation of one excitation and annihilation of two}: If there are two excitations in two of the plaquettes that have this qubit in common, $\sigma_{x,i}$ annihilates them and creates one excitation in the other plaquette (figure \ref{bath}b). Thus, the Fourier transform of $\sigma_{x,i}$ is given by:
 \bea \nonumber
&&S_{x,i} (\omega=2J)= \Pi_{-J} \sigma_{x,i} \Pi_J= \frac{1}{8}\sigma_{x,i}\bigg( (1-B^z_p) (1-B^z_{p^\prime}) (1+B^z_{p^{\prime\prime}}) \\
&&+(1-B^z_p) (1+B^z_{p^\prime}) (1-B^z_{p^{\prime\prime}})
+(1+B^z_p) (1-B^z_{p^\prime})(1-B^z_{p^{\prime\prime}}) \bigg).
\eea
Here, $\Pi_{J}$ denotes the projector onto the sub-space with two excitations and $\Pi_{-J}$ denotes the projector onto the sub-space with one excitation.
\end{itemize}

Therefore, we have:
\bea
&&S_{x,i}(6J)=\sigma_{x,i} \Pi_{3J}, \qquad \qquad S^\dagger_{x,i}(6J)=\sigma_{x,i} \Pi_{-3J},\\ \nonumber
&&S_{x,i}(2J)=\sigma_{x,i} \Pi_{J}, \qquad \qquad S^\dagger_{x,i}(2J)=\sigma_{x,i} \Pi_{-J}.
\eea
 All of the above arguments can be done in a similar fashion for the $z$ type error by substituting $\sigma_{z,i}$ for $\sigma_{x,i}$ and $B^x_P$ for $B^z_p$.\\
 
It is sufficient to prove the positivity of $\la [S_\alpha(\omega),X],[S_\alpha(\omega),X]\ra$ for a specific $\omega$, say $6J$. For the other $\omega$'s the procedure is the same. For this case $\la [S_\alpha(\omega),X],[S_\alpha(\omega),X]\ra$ is equal to the following:
\be
\text{tr} \bigg( \rho_\beta \Pi_{3J} [X, \sigma_{x,i}] [\sigma_{x,i},X] \Pi_{3J} \bigg)=\text{tr} \bigg( \Pi_{3J} \rho_\beta \underbrace{[X, \sigma_{x,i}]}_{A^\dagger} \underbrace{[\sigma_{x,i},X] }_A \bigg).
\ee

Since $\Pi_{3J}$ is a projector onto a sub-space of the system's Hilbert space it can be written as:
\be
\Pi_{3J} = \sum_m |\Psi_m\ra \la \Psi_m|.
\ee
The thermal state ($\rho_\beta$) also can be written as a mixture of eigenstates of the Hamiltonian, i.e.:
\be
\rho_\beta= \sum_k  \lambda_k |\Psi_k \ra \la \Psi_k|.
\ee
Therefore, we have:
\be
 \Pi_{3J} \rho_\beta=\sum_m \lambda_m|\Psi_m\ra \la \Psi_m|.
 \ee
If we diagonalize matrix $A$ and expand the eigenstates of the Hamiltonian as a superposition of the eigenstates of $A$, i.e.
\be
|\Psi_m\ra=\sum_a \psi_{m,a} |\Psi_{m,a} \ra,
\ee
we will end in the following:
\be
\text{tr} \bigg( \rho_\beta \Pi_{3J} [X, \sigma_{x,i}] [\sigma_{x,i},X] \Pi_{3j} \bigg)= \sum_{m,a} \lambda_m |\psi_{m,a}|^2,
\ee
which is clearly positive.\\

These arguments are not particular for TCC. For the other models $\omega$'s and $\Pi$'s are different, nevertheless, the procedure of the proof is the same as above.
\end{proof}

{}

\end{document}